\newtheorem{theorem}{Theorem}
\newtheorem{lemma}{Lemma}
\definecolor{darkpowderblue}{rgb}{0.0, 0.2, 0.7}
\definecolor{color1}{HTML}{5ec962} 
\definecolor{color2}{HTML}{21918c} 
\definecolor{color3}{HTML}{3b528b} 
\begin{document}

\title{Solving Helmholtz problems with finite elements on a quantum annealer}

\newcommand{\AffilMontefiore}{Department of Electrical Engineering and Computer Science, Institut Montefiore, University of Li\`ege, 4000 Li\`ege, Belgium}
\newcommand{\AffilCESAM}{Institut de Physique Nucl\'eaire, Atomique et de Spectroscopie, CESAM, University of Li\`ege, 4000 Li\`ege, Belgium}

\author{Arnaud Rémi}
\affiliation{\AffilMontefiore}

\author{François Damanet}
\affiliation{\AffilCESAM}

\author{Christophe Geuzaine}
\affiliation{\AffilMontefiore}

\begin{abstract}
    Solving Helmholtz problems using finite elements leads to the resolution of a linear system which is challenging to solve for classical computers. In this paper, we investigate how quantum annealers could address this challenge. We first express the linear system arising from the Helmholtz problem as a generalized eigenvalue problem (gEVP). The obtained gEVP is mapped into quadratic unconstrained binary optimization problems (QUBOs) which we solve using an adaptive quantum annealing eigensolver (AQAE) and its classical equivalent. We identify two key parameters in the success of AQAE for solving Helmholtz problems: the system condition number and the integrated control errors (ICE) in the quantum hardware. Our results show that a large system condition number implies a finer discretization grid for AQAE to converge, leading to a variable overhead, and that AQAE is either tolerant or not with respect to ICE depending on the gEVP. Finally, we establish lower bounds on the annealing time, narrowing the possibility of a quantum advantage for solving Helmholtz problems.
\end{abstract}

\maketitle

\section{Introduction}

Over the last decades, quantum computing has emerged as a crucial field to invest in~\cite{Pfaendler2024}, given its potential to solve computational problems that are intractable for classical computers~\cite{Lau2022, Liu2022, Golestan2023}. On the hardware side, two classes of platforms hold their own: i) universal gate-based quantum computers, which make it possible to perform arbitrary calculations, and ii) quantum annealers~\cite{Hauke2020}, which allow for solving specific optimization problems that can be mapped into Ising [or equivalently quadratic unconstrained binary optimization (QUBO)] problems. While quantum annealers seem more restrictive than gate-based quantum computers, they are in fact equivalent~\cite{aharonov2004}, although depending on the problem to solve one approach is usually more natural than the other. In this context, as Ising problems belong to the NP-hard computational complexity class~\cite{lucas2014ising}, quantum annealers are well-suited for solving a wide range of real-life problems, ranging from protein folding with health applications to finance, linear algebra or logistics.

However, despite the fast development of quantum hardware, the current numbers of qubits available as well as their fragility against noise have hindered, up to now, the achievement of practical quantum advantages~\cite{Herrmann2023,Daley2022, zimboras2025myths}. Yet, anticipating the advent of large-scale fault-tolerant quantum computers, many quantum algorithms with applications in various fields~\cite{arnault2024} have been developed.

On the application side, while quantum simulation remains one of the primary targets for quantum algorithms~\cite{lloyd1996universal, daley2022practical}, the scope of their application has expanded to encompass a broader range of partial dif  ferential equations (PDEs)~\cite{costa2019quantum, jin2022quantum, jin2023quantum}. In this study, we explore the near-term potential of quantum annealers for solving Helmholtz problems, a class of PDEs with wide-ranging applications in fields such as electromagnetism, acoustics and seismology. Classically, Helmholtz problems are addressed using discretization techniques such as the finite element method, which reduces the problem to solving a system of linear equations. The solution of these linear systems is however very challenging for current numerical solvers. One of the reasons is that the discretization of Helmholtz equations often results in indefinite systems, which are notoriously difficult for classical iterative solvers to handle efficiently~\cite{ernst2011difficult}. Furthermore, high-order finite element methods are commonly employed to accurately capture high-frequency oscillatory behavior, which generally leads to even more challenging linear systems for numerical solvers~\cite{eisentrager2020condition}. In this work, we investigate the performance and viability of quantum annealing-based approaches in addressing these challenges, offering insight into their potential advantages and limitations compared to classical methods.

Our paper is organized as follows. In Section II, we introduce Helmholtz problems, as well as their finite element formulations. We then introduce variational formulations of eigenvalue problems before showing how solving Helmholtz problems can be equivalent to solving generalized eigenvalue problems. In Section III, we present background on quantum annealing and we briefly overview the QAE and AQAE algorithms. In Section IV, we show our results and discuss them. Finally, the main conclusions are summarized in Section V.

\section{Helmholtz problem}

In this section, a one-dimensional Helmholtz problem with homogeneous Dirichlet and Neumann boundary conditions is introduced.
Two subproblems will be distinguished: i) the homogeneous case where solutions are the eigenfrequencies and the eigenmodes of the problem; ii) the nonhomogeneous case where the solution is the response of the system to a given excitation.

\subsection{One-dimensional Helmholtz problem}
The strong formulation of the one-dimensional Helmholtz problem on the interval $\Omega=\{x: 0<x<1\}$ with respectively homogeneous Dirichlet and Neumann boundary conditions on the left and on the right reads
\begin{equation}
    \begin{cases}
        \phi''(x) + k^2(x) \phi(x) = f(x), \quad x\in \Omega,\\
        \phi(0) = \phi'(1) = 0,
    \end{cases}
    \label{eq:hermitian_problem}
\end{equation}
where $\phi(x)$ is the system response, $k(x)=\omega/c(x)$ is the wavenumber with $c(x)$ the wave velocity and $\omega$ the angular frequency, $f(x)$ is the source term, and $\cdot'$ denotes the derivative with respect to $x$. In the following, we consider a heterogeneous domain consisting of vacuum in the left half (for $0 \leq x <1/2$, $k(x,\omega) = k_0 = \omega/c_0$) and $\mathrm{SiO}_2$ in the right half (for $1/2 \leq x \leq 1$, $k(x,\omega) = k_{\mathrm{SiO}_2} = \sqrt{3.9}k_0$), which could be viewed as a toy model for optical fibers. A discrete form of \eqref{eq:hermitian_problem} can be obtained using the finite element method.

\subsubsection{Finite element method}
The finite element method combines a weak formulation of the original PDE and a Galerkin projection using suitable basis functions.
\paragraph{Weak formulation.} To obtain a weak formulation of \eqref{eq:hermitian_problem}, the PDE is multiplied by test functions $v$ in  $H^1_D:=\{ f(x)\in H^1(\Omega): f(0) = 0\}$ and integrated by parts. It consists in finding $\phi(x)\in H^1_D$ such that
\begin{gather*}
    \int_0^1 \phi''v \,\mathrm{d}x + \int_0^1 k^2\phi v \, \mathrm{d}x = \int_0^1 fv \,\mathrm{d}x\\
\begin{split}
    \Leftrightarrow \quad -\int_0^1 \phi' v' \mathrm{d}x + \phi'(1)v(1) - \phi'(0)v(0) + \int_0^1k^2\phi v\, \mathrm{d}x\\= \int_0^1 fv \,\mathrm{d}x
\end{split}
\end{gather*}
holds for all test functions $v\in H^1_D$.
The boundary terms vanish since $v(0)=0$ and $\phi'(1)=0$, and the weak form simplifies into finding $\phi(x)\in H^1_D$ such that
\begin{equation}
    -\int_0^1 \phi' v' \mathrm{d}x + \int_0^1k^2\phi v\, \mathrm{d}x= \int_0^1 fv \,\mathrm{d}x, \quad\forall v\in H^1_D.
    \label{eq:weak_form}
\end{equation}

\paragraph{Galerkin projection.} A Galerkin projection is a projection of the solution of the original PDE onto a finite-dimensional subspace spanned by a family of basis functions $\{\varphi_j(x)\}_{j=0}^{N_{\mathrm{DOF}}-1}$. The solution $\phi$ of the Helmholtz problem is hence approximated by
\begin{equation}
    \phi(x) \approx \sum_{j=0}^{N_{\mathrm{DOF}}-1} \phi_j \varphi_j(x),
    \label{eq:galerkin_proj}
\end{equation}
with $\{\phi_j\}_{j=0}^{N_{\mathrm{DOF}}-1}$ a set of unknown coefficients.
By substituting \eqref{eq:galerkin_proj} into \eqref{eq:weak_form}, and by using the same basis functions as test functions, the weak formulation can be written in the following discrete form: find the coefficients $\phi_j$ such that
\begin{multline}
    \sum_{j=0}^{N_{\mathrm{DOF}}-1} \int_0^1 (-\varphi_j'\varphi_i')\, \mathrm{d}x \;\phi_j + \sum_{j=0}^{N_{\mathrm{DOF}}-1} \int_0^1 k^2 \varphi_j \varphi_i \,\mathrm{d}x \;\phi_j \\= \int_0^1 f\varphi_i \,\mathrm{d}x, \quad\forall i.
\end{multline}
This can be written in matrix form
\begin{equation}
    (K+M)\bm \phi = \bm f, \quad
    \begin{cases}
        K_{ij} = \int_0^1 (-\varphi_i'\varphi_j') \, \mathrm{d}x,\\
        M_{ij} = \int_0^1 k^2 \varphi_i \varphi_j \,\mathrm{d}x,\\
        f_j = \int_0^1 f\varphi_j(x) \,\mathrm{d}x.
    \end{cases}
    \label{eq:linear_system}
\end{equation}
\paragraph{Finite elements.}
The spatial domain is discretized into $N$ non-overlapping adjacent finite elements $\{E_\alpha\}$. Each element is associated with $p+1$ nodes $\{x_0^{(\alpha)}, x_1^{(\alpha)}, \dots, x_{p}^{(\alpha)}\}$, which correspond to interpolation points within the element. The chosen family of basis functions for this work are Lagrange polynomials, defined as:
\begin{equation}
\varphi_j^{(\alpha)}(x) = \prod_{\substack{i=0 \\ i \neq j}}^{p} \frac{x - x_i^{(\alpha)}}{x_j^{(\alpha)} - x_i^{(\alpha)}},
\end{equation}
where $\varphi_j^{(\alpha)}(x)$ is the $j$-th basis function associated with node $x_j^{(\alpha)}$, which satisfies $\varphi_j^{(\alpha)}(x_i^{(\alpha)}) = \delta_{ji}$. Within each element $E_\alpha$, the nodes $x_i^{(\alpha)}$ are the Gauss-Lobatto-Legendre (GLL) nodes. The order $p$ of the basis functions as well as the number of elements $N$ are crucial regarding the discretization errors, but also regarding the system condition number. For sufficiently smooth problems, the discretization error of finite elements with GLL nodes is $\mathcal{O}(1/N)^{p+1}$, while the condition number is quadratic in $N$ and cubic in $p$~\cite{eisentrager2020condition}.

\subsection{Variational formulation of eigenvalue problems}
In this subsection, we introduce how generalized eigenvalue problems (gEVPs) can be expressed as optimization problems, and what the conditions are for a gEVP to have such a variational formulation. Let us consider the following gEVP
\begin{equation}\label{gEVP}
    H\bm \phi = \lambda M \bm \phi,
\end{equation}
where $H$ is a Hermitian matrix of dimension $N_\mathrm{DOF} \times N_\mathrm{DOF}$, $M$ positive definite, and $\lambda$ and $\bm \phi$ the eigenvalues and eigenvectors of the gEVP, respectively. The solutions of such problems are pairs of eigenvalues and eigenvectors labeled $\{\lambda_n, \bm{\phi}_n \} = \{ \lambda, \bm{\phi}\}$ ($n = 0, \dotsc, N_{\mathrm{DOF}}-1$).
The min-max theorem allows to formulate the search of eigenpairs of an eigenvalue problem (standard or generalized) as a variational problem of the form (see App.~\ref{app:minmaxthm})
\begin{align}\label{syst:minmax1}
    &\lambda_n = \, \displaystyle\min_{\bm{\phi} \in \mathcal{S}_{n}} \frac{\bm{\phi}^\dag H \bm{\phi}}{\bm{\phi}^\dag M \bm{\phi}},\\
    &\bm \phi_n = \arg  \, \displaystyle\min_{\bm{\phi} \in \mathcal{S}_{n}} \frac{\bm{\phi}^\dag H \bm{\phi}}{\bm{\phi}^\dag M \bm{\phi}},
\label{syst:minmax2}
\end{align}
with $\cdot^\dag$ the conjugate transpose operator, and $\mathcal{S}_n = \mathrm{span}(\bm \phi_n, \cdots, \bm \phi_{N_{\mathrm{DOF}}-1})$. It can be shown that solving Eqs.\eqref{syst:minmax1}-\eqref{syst:minmax2} is equivalent to solving
\begin{align}
    \lambda_n &= \, \displaystyle\min_{\bm{\phi} \in \mathcal{S}_{n}} {\bm{\phi}^\dag H \bm{\phi}} \quad :\quad \bm \phi^\dag M \bm \phi = 1,\\
    \bm \phi_n &= \arg \, \displaystyle\min_{\bm{\phi} \in \mathcal{S}_{n}} {\bm{\phi}^\dag H \bm{\phi}}\quad :\quad \bm \phi^\dag M \bm \phi = 1.
\end{align}
 or  \begin{align}\label{eq:quadraticminmax1}
    \lambda_n &= \, \displaystyle\min_{\lambda \in \mathbb{R},\, \bm{\phi} \in \mathcal{S}_{n}} \bm{\phi}^\dag H \bm{\phi} - \lambda (\bm{\phi}^\dag M \bm{\phi} - 1),\\
    \bm \phi_n &= \arg \,
    \displaystyle\min_{\lambda \in \mathbb{R},\,\bm{\phi} \in \mathcal{S}_{n}} \bm{\phi}^\dag H \bm{\phi} - \lambda (\bm{\phi}^\dag M \bm{\phi} - 1),
    \label{eq:quadraticminmax2}
\end{align}
using the Lagrange multiplier method where $\lambda$ is the Lagrange multiplier. This formulation allows to compute all the eigenvalue-eigenvector pairs iteratively starting with $n=0$ and $\mathcal{S}_0 = \mathbb{C}^{N_{\mathrm{DOF}}}$. For $n>0$, the constraint $\bm \phi \in \mathcal{S}_n$ can be ensured by adding a projection term (see App.\ref{app:gevpqubomapping} for more details).

\subsection{Homogeneous problem}
The first problem that will be tackled below is the homogeneous Helmholtz problem. Its strong formulation writes
\begin{equation}
    \begin{cases}
        \phi''(x, \omega) + k^2(x,\omega) \phi(x, \omega) =0, \quad x\in \Omega,\\
        \phi(0)  = \phi'(1) = 0,
    \end{cases}
    \label{eq:homogeneous_problem}
\end{equation}
and has as solutions the eigenmodes with associated frequencies. The finite element discretization of this problem is given by
\begin{equation}
   \Tilde{K}\bm \phi = \lambda \Tilde{M}\bm \phi,\quad
       \begin{cases}
        \Tilde K_{ij} = \int_0^1 (-c^2 \varphi_i'\varphi_j') \, \mathrm{d}x,\\
        \Tilde M_{ij} = \int_0^1  \varphi_i \varphi_j \,\mathrm{d}x.
    \end{cases}
    \label{eq:homogeneous_gevp}
\end{equation}
which is a generalized eigenvalue problem for the ($\Tilde K, \Tilde M$) pair, with eigenvalue $\lambda$ related to the frequency through $\lambda = - \omega^2$. Using \eqref{eq:quadraticminmax1} and \eqref{eq:quadraticminmax2}, we can write the gEVP variational formulation as
\begin{align}
    \lambda_n &= \, \displaystyle\min_{\lambda \in \mathbb{R},\, \bm{\phi} \in \mathcal{S}_{n}} \bm{\phi}^\dag \Tilde{K} \bm{\phi} - \lambda (\bm{\phi}^\dag \Tilde{M} \bm{\phi} - 1),\\
    \bm \phi_n &= \arg \,
    \displaystyle\min_{\lambda \in \mathbb{R},\,\bm{\phi} \in \mathcal{S}_{n}} \bm{\phi}^\dag \Tilde{K} \bm{\phi} - \lambda (\bm{\phi}^\dag \Tilde{M} \bm{\phi} - 1).
\end{align}
\subsection{Non-homogeneous problem}
The second problem that will be solved is the non-homogeneous Helmholtz problem \eqref{eq:hermitian_problem} with $f(x) = \sin(2\pi x)$ and $k_0\in\{0, \pi, 2\pi\}$. Once discretized, it leads to the linear system \eqref{eq:linear_system}. By writing $\bm \phi = -\lambda \bm w/(\bm f^\dag \bm w)$, we obtain the following generalized eigenvalue problem:
\begin{equation}
    -\bm f \bm f^\dag \,\bm w= \lambda (K+M) \bm w
\end{equation}
which can be solved for $\lambda$ and $\bm w$. For the min-max theorem to hold, $(K+M)$ must be positive definite which may not be the case depending on the wavenumber $k$. Hence, for large $k$, the so-called normal equations have to be considered instead
\begin{equation}
    (K+M)^\dag (K+M) \bm \phi = (K+M)^\dag \bm f,
    \label{eq:helmholtz_normal}
\end{equation}
which will be written $A\bm \phi = \bm b$ with $A =  (K+M)^\dag (K+M)$ and $\bm b = (K+M)^\dag \bm f$ for simplicity. The normal equations lead to the following generalized eigenvalue problem:
\begin{equation}
    -\bm b \bm b^\dag \,\bm w= \lambda A \bm w,
    \label{eq:helmholtz_normal_gevp}
\end{equation}
where $\bm b \bm b^\dag$ is Hermitian and $A$ is positive definite. Here again, using \eqref{eq:quadraticminmax1} and \eqref{eq:quadraticminmax2}, we can write the gEVP variational formulation as
\begin{align}
    \lambda_n &= \, \displaystyle\min_{\lambda \in \mathbb{R},\, \bm{w} \in \mathcal{S}_{n}} \bm{w}^\dag (-\bm b \bm b^\dag) \bm{w} - \lambda (\bm{w}^\dag A \bm{w} - 1),\\
    \bm w_n &= \arg \,
    \displaystyle\min_{\lambda \in \mathbb{R},\,\bm{w} \in \mathcal{S}_{n}} \bm{w}^\dag  (-\bm b \bm b^\dag) \bm{w} - \lambda (\bm{w}^\dag A \bm{w} - 1).
\end{align}
The ground state $(\lambda_0,\bm w_0)$ of \eqref{eq:helmholtz_normal_gevp} is therefore related to the system response by $\bm \phi = -\lambda_0 \bm w_0 /\bm f^\dag \bm w_0$.

\section{Quantum annealing}\label{sec:QA}

Quantum annealing (QA) is a metaheuristic, i.e., a method to find (sufficiently good) solutions of optimization problems performed by quantum annealers. To do so, the qubits in the hardware are first initialised in a easy-to-prepare ground state of a Hamiltonian $H_0$. Then, their interactions are adiabatically adapted until reaching the values of the parameters $J_{kl}$ and $h_k$ of a classical Ising Hamiltonian $H_\mathrm{p}$ encoding the solution of the optimization problem of interest and of the form
\begin{equation}\label{Ising}
    H_\mathrm{p} = \sum_\alpha h_\alpha s_\alpha + \sum_{\alpha,\beta} J_{\alpha \beta} s_\alpha s_\beta,
\end{equation}
where $s_\alpha = \{-1,1\}$ are the spin variables stored in the qubits states. The time-dependent Hamiltonian governing the quantum annealing dynamics can be parametrized as follows
\begin{equation}\label{eq:dwave_hamiltonian}
    H(s) = A(s)H_0 + B(s)H_\mathrm{p},\quad s\in [0,1],
\end{equation}
where $s = t/t_\mathrm{a}$ is an adimensional parameter with $t_\mathrm{a}$ the total annealing time, and $A$ and $B$ time-dependent scheduling parameters with typically $A(0)\gg B(0)$ and $A(1)\ll B(1)$ so that $H(0) \propto H_0$ and $H(1) \propto H_\mathrm{p}$. According to the adiabatic quantum theorem, if the Hamiltonian $H(s)$ evolves sufficiently slowly, the quantum state of the system should remain in the instantaneous ground state of $H(s)$ for all times, thereby evolving towards the ground state of $H_\mathrm{p}$ for $s = 1$~\cite{sakurai1995modern, apolloni1989quantum}, i.e., to the spin configuration optimizing the problem, which can be extracted via direct measurement of the qubits in the computational basis at the end of the quantum annealing process. The minimum energy gap between the ground state and the first excited state is denoted by $\Delta$, and represents a key parameter to determine the annealing time $t_\mathrm{a}$, as typically one requires $t_\mathrm{a} \gg \Delta^{-2}$~\cite{sakurai1995modern, McGeoch2014}, although other studies suggested $t_\mathrm{a}\gg \Delta^{-3}$ \cite{jansen2007bounds} or $t_\mathrm{a}\gg \Delta^{-2}|\log{\Delta}|^{\beta}$~\cite{elgart2012note}, which ultimately depends on the specific form of $H(s)$. Nevertheless, the choice of the annealing time represents a possible source of errors. Another source is the decoherence induced by the coupling of the hardware to its environment, which actually suggests the existence of a tradeoff for choosing $t_\mathrm{a}$, as larger $t_\mathrm{a}$ implies stronger decoherence effects. Finally, other errors can come from a bad encoding of the values of $J_{kl}$ and $h_k$ in the hardware, as deviations from the right values might modify the ground state, or more generally from the approximations used to obtain the Ising Hamiltonian or from the binarization procedure of the variables when one has to deal with real values problems, as elaborated below in our examples.

Finding the ground state of an Ising Hamiltonian of the form~(\ref{Ising}) is strictly equivalent to finding the solution of a Quadratic Unconstrained Binary Optimization (QUBO) problem of the form
\begin{equation}\label{QUBO}
    \bm{q}_\star = \arg \min_{\bm{q}} \bm{q}^T Q \bm{q},\quad \bm{q} \in \{0,1\}^D,
\end{equation}
where the binary variables $q_\alpha$ that are the components of the vector $\bm{q}$ of dimension $D$ are simply related to the spin variables through $q_\alpha = (s_\alpha + 1)/2$, and where the $\star$ symbol denotes the optimized value of $\bm{q}$. Higher-order binary polynomial functions can be minimized using QA by expressing them as binary quadratic functions, although this requires the introduction of additional binary variables \cite{biamonte2008nonperturbative, babbush2014adiabatic}.
In the following, we will be interested in formulating the solution of an eigenvalue problem as a QUBO of the form~(\ref{QUBO}), keeping in mind that it is equivalent to the Ising model~(\ref{Ising}) natively implemented in the hardware.

\subsubsection{QUBO formulation of gEVPs}
The gEVP variational formulation given in \eqref{eq:quadraticminmax1}-\eqref{eq:quadraticminmax2} is an optimization problem that has continuous variables. A QUBO formulation can be derived by parameterizing each continuous variables with $D$ binary ones:
\begin{align}\label{eq:continuousbinarymapping}
    &\phi_j(\bm q_j) = \phi_{j,\mathrm{min}} + \bm b^T \bm q_j,\\&\mathrm{with}\,\quad\begin{cases}
        \bm q_j \in \{0,1\}^D,\\
        b_k = (\phi_{j,\mathrm{max}}-\phi_{j,\mathrm{min}}) 2^{-k}, \quad k=0,...,D-1.\nonumber
    \end{cases}
\end{align}
Injecting \eqref{eq:continuousbinarymapping} into \eqref{eq:quadraticminmax2} lead to the following QUBO formulation
\begin{equation}
    \bm q_\star = \arg \min_{\bm q} \bm q^T Q_H \bm q - \gamma \bm q^T Q_M \bm q,
    \label{eq:gevpqubo}
\end{equation}
where $\bm q \in \{0,1\}^{N\times D}$ is given by
\begin{equation*}
    \bm q^T = [\bm q_0^T, ...,\bm q_{N-1}^T]
\end{equation*}
and the derivation of $Q_H$ and $Q_M$ is in app.~\ref{app:gevpqubomapping}.

\subsubsection{Adaptive quantum annealer eigensolver}
The gEVP QUBO formulation \eqref{eq:gevpqubo} gives rise to the quantum annealer eigensolver (QAE) \cite{teplukhin2019calculation, teplukhin2020electronic, teplukhin2020solving, teplukhin2021computing, teplukhin2022sampling} which, by performing $N_\lambda$ bisection iterations to adjust the value of the lagrange parameter $\lambda$ (see Alg.~\ref{alg:QAE} for the detailed algorithm), allows the calculation of the ground state of a gEVP of the form (\ref{gEVP}), with a number-of-qubit limited precision. A quantum annealer is called to solve the QUBO of each iteration. The number of bisection iterations is only related to the final precision on $\lambda$. Several tunable parameters such as the annealing time $t_\mathrm{a}$ and the number of samples $\texttt{numreads}$ (i.e., the number of times each QUBOs are solved to later choose the best solution) are inputs to the quantum annealer. An adaptive version of the QAE named adaptive quantum annealer eigensolver (AQAE) has been proposed in Ref.~\cite{a20212} to circumvent the limited precision of the QAE by refining the bounds on which continuous variables are defined at each iteration $N_\delta$ times using the box algorithm~\cite{srivastava2019box} according to a geometric ratio $r$. The final precision of AQAE is user-defined, and directly related to the number of box-algorithm iterations ($N_\delta \propto \log_r(1/\varepsilon)$). The AQAE algorithm is detailed in Alg.\ref{alg:AQAE}. If a classical annealer (i.e., a simulated annealing algorithm~\cite{bertsimas1993simulated} performed by a classical computer) is used instead of the quantum annealer to solve the QUBOs, Alg.~\ref{alg:AQAE} becomes the adaptive classical annealer eigensolver (ACAE). In this paper both AQAE and ACAE will be used. As classical computational resources are currently inexpansive compared to quantum ones, \texttt{numreads} can be chosen higher for ACAE than for AQAE when it is needed. By doing so, ACAE results are considered as a reference solution that we can try to reproduce using AQAE.

\begin{algorithm}[H]
\caption{QAE algorithm}
\begin{algorithmic}[1]
\REQUIRE $H$, $M$, $N_\lambda$, $\lambda_\mathrm{min}$, $\lambda_\mathrm{max}$, $\bm \phi_\mathrm{min}$, $\bm \phi_\mathrm{max}$
\STATE Compute $Q_H$ and $Q_M$
\STATE $i' \leftarrow 0$
\WHILE{$i' < N_\lambda$}
    \STATE $\lambda \leftarrow (\lambda_\mathrm{min} + \lambda_\mathrm{max})/2$
    \STATE $\bm q_\star \leftarrow \arg \min_{\bm q} \bm q^T Q_H\bm q - \lambda\bm q^T Q_M\bm q$
    \STATE $\bm \phi_\star \leftarrow \bm \phi (\bm q_\star)$
    \IF{$\bm \phi_\star^T H \bm \phi_\star - \lambda  \bm \phi_\star^T M \bm \phi_\star \geq 0$}
        \STATE $\lambda_\mathrm{min} \leftarrow \lambda$
    \ELSE
        \STATE $\lambda_\mathrm{max} \leftarrow \lambda$
    \ENDIF
    \STATE $i' \leftarrow i' + 1$
\ENDWHILE
\STATE $\lambda_\star \leftarrow \lambda_\mathrm{max}$
\RETURN $\bm \phi_\star$, $\lambda_\star$
\end{algorithmic}
\label{alg:QAE}
\end{algorithm}

\begin{algorithm}[H]
\caption{AQAE algorithm}
\begin{algorithmic}[1]
\REQUIRE $H$, $M$, $N_\lambda$, $N_\delta$, $\lambda_\mathrm{min}$, $\lambda_\mathrm{max}$, $\bm \phi_\mathrm{min}$, $\bm \phi_\mathrm{max}$, $r$
\STATE $\bm \delta \leftarrow \bm \phi_\mathrm{max} - \bm \phi_\mathrm{min}$
\STATE $i \leftarrow 0$
\WHILE{$i < N_\delta$}
    \STATE Run Alg.~\ref{alg:QAE} with \{$H$, $M$, $N_\lambda$, $\lambda_\mathrm{min}$ ,$\lambda_\mathrm{max}$, $\bm\phi_\mathrm{min}$, $\bm \phi_\mathrm{max}$\} as inputs and get {$\bm \phi_\star^{(i)}$,$\lambda^{(i)}$}
    \STATE $\bm \delta \leftarrow r \bm \delta$ 
    \STATE $\bm \phi_\mathrm{min} \leftarrow \bm \phi_\star - \bm \delta/2$ 
    \STATE $\bm \phi_\mathrm{max} \leftarrow \bm \phi_\star + \bm \delta/2$ 
    \STATE $i \leftarrow i + 1$
\ENDWHILE
\STATE $\lambda_\star \leftarrow \lambda^{(N_\delta-1)}$
\STATE $\bm \phi_\star \leftarrow \bm \phi_\star^{(N_\delta-1)}$
\RETURN $\bm \phi_\star, \lambda_\star$
\end{algorithmic}
\label{alg:AQAE}
\end{algorithm}

\subsubsection{Solving partial differential equations on a quantum annealer} 
The possibility of solving partial differential equations on a quantum annealer has recently been investigated \cite{srivastava2019box, criado2022qade, raisuddin2022feqa, gaidai2022molecular, conley2023quantum, nguyen2024quantum, kudo2024annealing}. Most of these investigations are based on minimizing the residual of the linear system arising from the discretization of PDE, that is, from a linear system $A\bm x = \bm b$, minimizing the functional
\begin{equation}
    \mathcal{J}(\bm x) = ||A\bm x - \bm b||^2,
\end{equation}
which ultimately has to be expressed using binary variables with a parameterization such as \eqref{eq:continuousbinarymapping}. An issue of this approach is that writing the parameterization \eqref{eq:continuousbinarymapping} requires specifying bounds over the continuous variables $\bm x$, that is, prescribing $\{\bm x_\mathrm{min}, \bm x_\mathrm{max}\}$ and implying $\bm x \in [\bm x_\mathrm{min}, \bm x_\mathrm{max}]$. Raisuddin and De solved this issue by performing pre-iterations that are aimed at calibrating these bounds~\cite{raisuddin2022feqa}, but it is difficult to assess if such approaches work in the general case. Instead, one can solve the linear system by mapping it into a generalized eigenvalue problem whose ground state encodes the linear system solution~\cite{kudo2024annealing}. Using this approach, bounds have to be prescribed on the continuous variables of the eigenvectors that are defined up to a multiplicative constant. 
As eigenvectors are defined as such, $\bm x_\mathrm{min}$ and $\bm x_\mathrm{max}$ can be arbitrarily fixed to $-\bm 1$ and $\bm 1$ and no prior information is required anymore. In general, it is not possible to have prior informations about the order of magnitude of the Helmholtz problem solution as it is dependent to the wavenumber and to the source term. Hence, we decide to solve the gEVPs using AQAE~\cite{a20212} instead of directly solving the linear systems.

\section{Results}
In this section, the one-dimensional Helmholtz problem is solved using both AQAE and ACAE. First, we compute the eigenmodes and eigenfrequencies by solving the generalized eigenvalue problem \eqref{eq:homogeneous_gevp} arising from the finite element discretization of \eqref{eq:homogeneous_problem}. Afterwards, the response of the system to a source term will be computed by solving the linear system \eqref{eq:helmholtz_normal}, later expressing in its gEVP formulation \eqref{eq:helmholtz_normal_gevp}, arising from the finite element discretization of \eqref{eq:hermitian_problem} for various wavenumbers. The success of AQAE and ACAE for computing the eigenvalues and the eigenmodes of a generalized eigenvalue problem $H\bm \phi = \lambda M \bm \phi$ is quantified by the evolution of these three metrics:
\begin{enumerate}
    \item[i)] the relative residual, expressed as $R(i)/R(0)$ where $R(i) = ||H \hat{\bm\phi}^{(i)} - \lambda^{(i)} M  \hat{\bm\phi}^{(i)}||^2$;
    \item[ii)] the relative eigenvalue discrepancy, expressed as $|\lambda^{(i)} - \overline{\lambda}|/|\overline{\lambda}|$ where $\overline{\lambda}$ is a classical algorithm's reference eigenvalue;
    \item[iii)] the relative eigenmode discrepancy, expressed as $|\bm \phi^{(i)} - \overline{\bm \phi}|/|\overline{\bm \phi}|$ where $\overline{\bm\phi}$ is a classical algorithm's reference eigenmode;
\end{enumerate}
where the index $i$ denotes the box algorithm iteration number.

The impact of parameters such as the number of samples and the annealing time on the classical/quantum annealing procedures has already been studied in several recent publications (\textit{e.g.}, Ref.~\cite{nguyen2024quantum}). Here, we will study how iterative quantum annealing-based algorithms such as AQAE behave with respect to the condition number $\sigma $ (i.e., the ratio between the largest and the smallest eigenvalue of the problem) and the binary discretization dictated by the parameter $D$.

\subsection{Homogeneous problem}\label{sec:homogeneous_results}
The homogeneous Helmholtz problem, discretized into the generalized eigenvalue problem \eqref{eq:homogeneous_gevp}, is solved using AQAE and ACAE for the three first eigenmodes and eigenfrequencies and results are shown in Figure~\ref{fig:pannel_gevp}. Results show that the ACAE algorithm converged towards the first three eigenmodes, while only the first eigenmode is successfully computed using AQAE with DWave's quantum annealer~\cite{dwavedocumentation}.
Here, we only show the relative residual as it is the only metric that is independent from the ground truth. The convergence of the eigenvalue and eigenmode discrepancies is shown in App.~\ref{app:complementary}. As expected, the relative residual convergence curves coincide with the graphical results of the eigenmodes. We notice that AQAE (performed using DWave's quantum annealer) converged for the computation of the first eigenmode, but failed for the second and third eigenmodes. ACAE converged for both of the three eigenmodes.

\begin{figure*}[htbp]
    \centering
    \input{pannel_gevp_new.tikz}
    \caption{\textbf{Solution of the one-dimensional homogeneous Helmholtz problem} \eqref{eq:homogeneous_gevp}. Results obtained using ACAE (disks and stars) and AQAE (crosses and pluses) for the three first eigenmodes (labelled with $n = 0,1,2$), for first order ($p=1$) and fifth order ($p=5$) finite elements, with $D=2$. The dashed lines represent the numerical reference solution obtained via a classical solver on a refined mesh. 
    Other parameters are: $N_\delta=25$, $N_\lambda=10$, $r=0.5$,  \texttt{numreads$=100$} (AQAE) and 1000 (ACAE), $t_\mathrm{a} = 100\,\mu$s, and \texttt{beta\_schedule}=geometric~\cite{dwavedocumentation}.
    }
    \label{fig:pannel_gevp}
\end{figure*}

\subsection{Non-homogeneous problem}\label{sec:non_homogeneous_results}
The non-homogeneous Helmholtz problem, discretized and expressed as the generalized eigenvalue problem \eqref{eq:helmholtz_normal_gevp}, is solved using AQAE and ACAE. The solution of \eqref{eq:helmholtz_normal} is derived from the ground-state of \eqref{eq:helmholtz_normal_gevp}.  The finite element order has been varied between $p=1$ and $p=5$, the density of the binary discretization has been varied in $D\in\{2,3,4,5,6,7,8\}$ and the vacuum wavenumber have been varied in $k_0\in\{0, \pi, 2\pi\}$. Results are shown in Figure~\ref{fig:pannel_helmholtz}. Here again, only the relative residual is shown in the main text, while the eigenvalue and eigenmode discrepancies are shown in App.~\ref{app:complementary} as a complementary resource.
As classical computational resources are inexpansive compared to quantum ones, \texttt{numreads} is fixed to 10000 when simulated annealing is chosen while it is fixed to 1000 for quantum annealing. By doing so, ACAE results are considered as a reference. The results of the ACAE show that the success of the algorithm depends on $D$. Table~\ref{tab:K_helmholtz} shows, for all the cases that have been studied, the minimum number of binary variables $D^\star$ that led to the convergence of ACAE. Unlike for the homogeneous Helmholtz problem, AQAE (with $D=D^\star$) did not converge in any of our test cases.

\begin{figure*}[htbp]
    \centering
    \input{pannel_helmholtz_new.tikz}
    \caption{\textbf{Solution of the one-dimensional non-homogeneous Helmholtz problem} \eqref{eq:helmholtz_normal}. Results obtained using AQAE (crosses and pluses) and ACAE (disks and stars), for first order ($p=1$) and fifth order ($p=5$) finite elements with (a) $k_0=0$, (b) $k_0=\pi$, and (c) $k_0=2\pi$. AQAE results are generated with $D=D^\star$ listed in Table~\ref{tab:K_helmholtz}. Thin dashed lines correspond to the finite element solution with the same ($N,p$) combinations as for AQAE and ACAE. Error bars correspond to the standard deviation of the measurements. 
    Other parameters are: $N_\delta=25$, $N_\lambda=10$, $r=0.5$,  \texttt{numreads$=1000$} (AQAE) and 10000 (ACAE), $t_\mathrm{a} = 100\,\mu$s, and \texttt{beta\_schedule}=geometric~\cite{dwavedocumentation}.}
\label{fig:pannel_helmholtz}
\end{figure*}

\begin{table}[h!]
    \centering
    \begin{tabular}{|c|c|c|c|}
        \hline
         & $k_0=0$ & $k_0=\pi$ & $k_0=2\pi$ \\ \hline
        $N=10$, $p=1$& $D^\star = 5$ & $D^\star = 6$ & $D^\star = 3$ \\ \hline
        $N=2$, $p=5$ & $D^\star = 6$ & $D^\star = 7$ & $D^\star = 3$ \\ \hline
    \end{tabular}
    \caption{Minimum binary discretization $D^\star$ that led to convergence for solving \eqref{eq:helmholtz_normal_gevp} using ACAE.}
    \label{tab:K_helmholtz}
\end{table}

\subsection{Discussion}
In this subsection, we review the two key parameters that we believe to influence the convergence of AQAE. There are numerous types of sources of errors in AQAE : i) errors during the annealing procedure, e.g., noise in quantum annealers and insufficient annealing times (in this case, annealers do not find the optimal solution of the QUBOs), ii) errors in the problem representation (in this case, annealers do find the optimal solution but of the wrong QUBOs), iii) errors due to incorrect path in the optimal solutions (in this case, the sequence of optimal solutions does not converge to the global minimum of the continuous objective function). The influence of the annealing time $t_\mathrm{a}$ and of the number of samples \texttt{numreads} have been studied by many researchers \cite{nguyen2024quantum}. In this paper, we investigate how other sources matter in QA-based methods. 

\subsubsection{System condition number}
We observe that the condition number $\sigma$ of the linear system is strongly correlated with $D^\star$ (see Table~\ref{tab:sigma_helmholtz} and compare to Table~\ref{tab:K_helmholtz}). Hence, it seems that at fixed $D$, the spectral properties of the system dictate the convergence of AQAE and ACAE optimal path (i.e., the sequence of optimal solutions) towards the global optimum of a continuous objective function.
\begin{table}[h!]
    \centering
    \begin{tabular}{|c|c|c|c|}
        \hline
         & $k_0=0$ & $k_0=\pi$ & $k_0=2\pi$ \\ \hline
        $N=10$, $p=1$& $\sigma = 175^2$ & $\sigma= 382^2$ & $\sigma = 34^2$ \\ \hline
        $N=2$, $p=5$ & $\sigma = 979^2$ & $\sigma = 4801^2$ & $\sigma = 133^2$ \\ \hline
    \end{tabular}
    \caption{Condition number of $(K+M)^\dag(K+M)$ (with $K$ and $M$ defined in \eqref{eq:linear_system}) for $(N,p) \in \{(10,1),(2,5)\}$ and $k_0 \in \{0,\pi,2\pi\}$.}
    \label{tab:sigma_helmholtz}
\end{table}

\paragraph{Interpretation} Consider the linear system $A\bm x = \bm b$ with Hermitian and positive definite $A$. This linear system can be formulated as the following convex optimization problem
\begin{equation}\label{eq:convexoptimization}
    \bm x_\star = \arg \min_{\bm x}\frac{1}{2}\bm x^T A \bm x - \bm b^T\bm x.
\end{equation}
Let us solve this optimization problem by discretizing the space with a finite number of discrete variables, similarly as AQAE. Let us note $\Tilde{\bm x}_\star$ the approximate optimal solution, that is, the best solution among the finite set of discrete variables.
If the condition number of $A$ is one ($\sigma(A) = 1$), the approximate optimal solution is the one which minimizes the Euclidian distance $||\Tilde{\bm x}_\star - \bm x_\star||$. Hence, in this case, the box-algorithm will always refine the grid around the closest approximate solution and the AQAE will converge for all $D$. If the condition number of $A$ is larger than one ($\sigma(A) > 1$), the approximate optimal solution is the one that minimizes the distance $||\Tilde{\bm x}_\star - \bm x_\star||_A = [(\tilde{\bm{x}}_\star - \bm{x}_\star)^\dag A(\tilde{\bm{x}}_\star - \bm{x}_\star)]^{1/2}$ and not the Euclidian distance. Consequently, it may happen that the approximate solution is not the closest to the global optimum, which would make the box-algorithm zoom into the wrong region of the space. This is illustrated in Figure~\ref{fig:pannel_condition_number}. In Figure~\ref{fig:pannel_condition_number}a, $\sigma(A)$ is small and the box algorithm successfully refines the approximate solution close to the global optimum at each iteration. In Figure~\ref{fig:pannel_condition_number}b, $\sigma(A)$ is large and the box algorithm refines around an approximate solution that is not the closest to the global optimum, and finally gets stuck in a non-optimal region. For large condition number, this issue can be handled by either increasing the number of discrete variables or decreasing the geometric ratio.

\begin{figure*}[htbp]
    \centering
    \input{pannel_boxalgo.tikz}
    \caption{\textbf{Box algorithm examples.} Results obtained by solving \eqref{eq:convexoptimization} with $D=2$ and $r = 0.5$ for (a) $\sigma(A) \approx 5$, (b) $\sigma(A) \approx 50$. The discrete variables are represented as black points, the approximate optimum is represented as a black star, and the global optimum is represented as a blue star.}
    \label{fig:pannel_condition_number}
\end{figure*}

\subsubsection{Integrated control errors}
As mentionned earlier, there are different kinds of error sources in D-Wave's quantum annealers, such as errors due to the fact that the process is non adiabatic in practice (the annealing time being finite and often too small), or errors due to noise in the quantum processor~\cite{Oshiyama2022, Gulacsi2023}. Here, we believe that integrated control errors (ICE) \cite{dwavedocumentation}, i.e., errors in the mapping between the QUBO and the qubits interactions in the quantum annealer, are the major cause of the algorithm failing. We also believe that the fact that ICE lead to convergence issues or not is problem dependent. In order to evaluate how ICE can lead to convergence issues, we model ICE as in Ref.~\cite{teplukhin2019calculation} by a simple additional gaussian contribution in the QUBOs
\begin{equation*}
    Q_\mathrm{eff} = Q + \delta Q,\quad \delta Q_{ij} = \eta_{ij} |Q_\mathrm{max}|,
\end{equation*}
where $\eta_{ij} \sim \mathcal{N}(0, \sigma_\eta)$ and $|Q_\mathrm{max}|$ is the maximum value of $Q$ in absolute terms. We later solve the perturbed QUBOs using ACAE. Figure~\ref{fig:pannel_perturbation_gevp} shows the amount of orders of magnitude that separate the relative residual at the first and last ACAE iteration for the computation of the three first eigenmodes (i.e., the results of Sec.~\ref{sec:homogeneous_results}), with respect to $\sigma_\eta$ that quantifies the magnitude of ICE. Therefore, it quantifies how the magnitude of ICE affects the convergence of ACAE. We observe that above a certain threshold $\sigma_\eta > \sigma^\star$ that is specific to the problem we solve, ACAE does not converge anymore. We refer to this threshold as the critical ICE magnitude. We define it as the ICE for which the relative residual saturated above $10^{-5}$, on average.
\begin{figure}[htbp]
    \centering
    \input{pannel_perturbation_gevp_v2.tikz}
    \caption{Relative residual as a function of the perturbation $\sigma_\eta$ (a) in a homogeneous material for first order finite elements ($N=10$, $p=1$) (b) for fifth order finite elements ($N=2$, $p=5$). Error bars correspond to the standard deviation of the measurements. ACAE parameters: $D=2$, $N_\delta=25$, $N_\lambda=10$, $r=0.5$ \texttt{numreads}$=1000$, \texttt{beta\_schedule}=geometric~\cite{dwavedocumentation}.}
    \label{fig:pannel_perturbation_gevp}
\end{figure}
For the first eigenmode, we find $\sigma_\eta^\star \approx 0.008$ (cf. Figure~\ref{fig:pannel_perturbation_gevp}~a) and $\sigma_\eta^\star \approx 0.006$ (cf. Figure~\ref{fig:pannel_perturbation_gevp}~b) for first-order and fifth-order finite element, respectively. The critical ICE is significantly lower for the higher modes ($n=1,2$) and is therefore problem-dependent. Similarly, Figure~\ref{fig:pannel_perturbation_helmholtz} shows how ICE affects the convergence of ACAE for the non-homogeneous problem solved in Sec.~\ref{sec:non_homogeneous_results}. It can be observed that a ICE magnitude lower of 0.2$\%$ is enough to make ACAE not converge for all the cases that have been studied.
In D-Wave's quantum annealer, the ICE magnitude is roughly of the order of 1$\%$ \cite{dwavedocumentation}. This therefore explains why AQAE did not converge in Sec.~\ref{sec:homogeneous_results} for the second and the third modes, and in Sec.~\ref{sec:non_homogeneous_results} in all the cases. It is worth noting that hardware ICE are expected to decrease in the future, and can be handled using quantum annealing correction algorithms~\cite{QAC, pudenz2015quantum}.

\begin{figure}[htbp]
    \centering
    \input{pannel_perturbation_helmholtz_v2.tikz}
    \caption{Relative residual as a function of the perturbation $\sigma_\eta$ for $D=D^\star$ and (a) $k_0=0$, (b) $k_0=\pi$ , (c) $k_0=2\pi$. Error bars correspond to the standard deviation of the measurements. ACAE parameters: $N_\delta=25$, $N_\lambda=10$, $r=0.5$ \texttt{numreads}$=10000$, \texttt{beta\_schedule}=geometric~\cite{dwavedocumentation}.}
    \label{fig:pannel_perturbation_helmholtz}
\end{figure}

\subsubsection{Scaling analyses}

Solving Helmholtz problems leads to solving linear systems. Depending on the wavenumber, the linear system is positive-definite or indefinite in the low- and high-frequency regimes, respectively. Such systems can be solved using Krylov subspace iterative methods such as Conjugate Gradients for positive definite systems, or GMRES or CGNR (CG on the normal equations), in the indefinite case. Without preconditioning and in exact arithmetic, these algorithms have a worst complexity of $\mathcal{O}(N_\mathrm{DOF}^2)$. The most competitive classical approaches rely on preconditioning techniques, e.g., multigrid~\cite{elman2001multigrid} or multi-level domain decomposition methods~\cite{gander2002optimized}, which can scale as $\mathcal{O}(N_\mathrm{DOF})$ and $\mathcal{O}(N_{\mathrm{DOF}}^{{(d+1)}/{d}})$ for definite and indefinite systems, respectively, with $d$ the number of space dimensions~\cite{galkowski2025convergence}. For indefinite systems, these scalings hold only if it is assumed that infinite parallel resources are available.

Whether QA-based methods are valuable or not depends on the comparison with the scaling of these classical algorithms. Unlike classical algorithms, though, QA-based methods do not explicitly scale with $N_\mathrm{DOF}$, but rather with the minimum energy gap $\Delta_\mathrm{min}$ of the Hamiltonian $H(t)$ that is related to the annealing time $t_\mathrm{a} = \mathcal{O}(\Delta_\mathrm{min}^{-2})$. The analysis below is relevant for one single AQAE iteration. However, the total number of iterations is independent of the number of degrees of freedom, since it only depends on the final precision $\varepsilon$ as $N_\delta \propto \log_r(1/\varepsilon)$.\\

By definition, the minimum energy gap is upper bounded by the energy gap of the classical Hamiltonian $H_\mathrm{p}$, which we write $\Delta_\mathrm{p}$. The eigenvalues of $H_\mathrm{p}$ are the values of the discretized objective function. As an example, consider the minimisation of the following functional
    \begin{equation}\label{eq:optimizationproblem}
        \mathcal J(x) = \frac{1}{2}\bm x^\dag A \bm x - \frac{1}{2}\bm x^\dag \bm b - \frac{1}{2}\bm b^\dag \bm x.
    \end{equation}
    Mapping the continuous variables $\bm x$ into spins ($\bm x \rightarrow \bm x(\bm s)$) leads to a discrete Ising problem that has as eigenvalue spectrum the ensemble of discrete values of the objective function at each $\bm x(\bm s)$. From this observation, we estimate the value of $\Delta_\mathrm{p}$ as follows:
\begin{align}
    \Delta_\mathrm{p} &= E_1 - E_0 = E_0 + \bm y^\dag A \bm y - E_0 \\
    &= \left({\sum_i c_i \bm\phi_i} \right)^\dag \left(\sum_j \lambda_j {\bm\phi_j}{\bm\phi_j^\dag}\right)\left({\sum_kc_k\bm\phi_k}\right)\\
    &= \sum_i |c_i|^2 \lambda_i, 
\end{align}
with $E_1$ and $E_0$ the two lowest values of the discretized objective function, $\lambda_i$ the $i$-th lowest eigenvalue of $A$ and $\bm \phi_i$ the corresponding eigenvector, and $\varepsilon$ the discretization step. The energy gap of $H_\mathrm{p}$ is therefore related to the spectral properties of $A$. Numerically, we observe that $\Delta_\mathrm p$ grows with $\mathcal{O}(\mathrm{Tr}(A)/N_\mathrm{DOF})$, i.e., $\Delta_\mathrm p$ grows with the mean eigenvalue of $A$ (see Figure~\ref{fig:delta_p}). Since it can be shown that for Helmholtz problems $\mathrm{Tr}(A)$ grows as $\mathcal{O}(N_\mathrm{DOF}^{2/d})$ and $\mathcal{O}(N_\mathrm{DOF}^{4/d-1})$ for the original and normal equations respectively, we obtain the following expected lower bound for the annealing time
\begin{equation}
    t_\mathrm{a} = \mathcal{O}(\Delta_\mathrm{min}^{-2}) \geq \mathcal{O}(\Delta_\mathrm{p}^{-2}) = \begin{cases}
        \mathcal{O}(\varepsilon^{-4} N_\mathrm{DOF}^{2-4/d})\, \text{at LF},\\
        \mathcal{O}(\varepsilon^{-4} N_\mathrm{DOF}^{4-8/d})\, \text{at HF}.
    \end{cases}
\end{equation}

\begin{figure}[htbp]
    \centering
    \input{delta_p.tikz}
    \caption{\textbf{Evolution of the Ising gap with the number of degrees of freedom}. The Ising gap encodes the discretized objective function \eqref{eq:optimizationproblem} with $D=1$, $p=1$ and $\varepsilon = 1$. Dashed lines show the evolution of $\Delta_\mathrm p$ with the number of degrees of freedom for the original equations (in blue) and the normal equations (in red). Solid lines represents the evolution of the mean eigenvalue of the original equations (in blue) and the normal equations (in red).}
    \label{fig:delta_p}
\end{figure}
\noindent These lower bounds suggest that a quantum advantage, if it exists, could occur only for $d \leq 4$ and $d \leq 3$ at LF and HF, respectively.

Note finally that $\Delta_\mathrm{min}$ does not generally scale polynomially, and can scale exponentially due to discontinuous quantum phase transitions. Different strategies have been investigated to avoid such quantum phase transitions, e.g., by modifying the driver Hamiltonian $H_0$~\cite{seki2012quantum}, but those methods are not implemented yet in D-Wave's quantum annealer.
In order to estimate the behavior of the gap of the full Hamiltonian (not only $H_\mathrm{p}$), here we perform finite-size scaling for $\Delta_\mathrm{min}$ by simulating $\Delta(s)$ for various $N_\mathrm{DOF}$ for the normal equations arising from the first-order finite element discretization of the Helmholtz problem we solved previously. We considered three different Hamiltonians :
\begin{enumerate}
    \item[(i)] The D-Wave's Hamiltonian
    \begin{equation}
        H_\mathrm{DWave}(s) = A(s) H_0 + B(s)\tilde H_\mathrm{p},
    \end{equation}
    with $A(s)$ and $B(s)$ available in D-Wave's documentation~\cite{dwavedocumentation}, $H_0 = \sum_i \sigma_x^{(i)}$, and $\tilde H_\mathrm p = \sum_i \tilde{h}_i \sigma_z^{(i)} + \sum_{i > j} \tilde{J}_{ij} \sigma_{z}^{(i)} \sigma_{z}^{(j)}$ the Ising Hamiltonian which encodes the normal equations arising from the Helmholtz problem, rescaled according to $(J_{ij}, h_j)\rightarrow (\tilde{J}_{ij}, \tilde{h}_j) = (J_{ij}, h_j)/\max (|J|,|\bm h|)$ in order to reproduce the normalization applied in D-Wave’s quantum processing unit, which rescales problem coefficients to fit within its programmable range, and where $\sigma_x$ and $\sigma_z$ are the usual Pauli operators.\\
    \item[(ii)] The linear scheduling Hamiltonian
    \begin{align}
        H_\mathrm{lin}(s) &= A(s) H_0 + B(s) H_\mathrm{p},
    \end{align}
    with $A(s) = 1-s$, $B(s) = s$ and $H_\mathrm p$ the Ising Hamiltonian which encodes the normal equations arising from the Helmholtz problem.\\
    \item[(iii)] The linear scheduling Hamiltonian with antiferromagnetic fluctuations~\cite{seki2012quantum}
    \begin{equation}
        H_\mathrm{af}(s) = A(s)H_0 + B(s)\left( A(s) H_1 + B(s) H_\mathrm{p} \right),
    \end{equation}
    with $A(s) = 1-s$, $B(s) = s$, $H_0 = \sum_i \sigma_x^{(i)}$ and $H_1 = N\left( \sum_i\sigma_x^{(i)}/N \right)^2$.\\
\end{enumerate}

\begin{figure*}[ht]
    \centering
    \input{pannel_gap.tikz}
    \caption{\textbf{Time evolution of the Hamiltonian gap for various problem sizes.}
    Results obtained using $D=1$, $p=1$ and $\varepsilon=1$, with (a) $H(s) = H_\mathrm{DWave}$, (b) $H(s) = H_\mathrm{lin}$, (c) $H(s) =  H_\mathrm{af}$. The wave number is varied between $k_0 = 0$ (left column), $k_0=\pi$ (middle column) and $k_0 = 2\pi$ (right column).}
    \label{fig:enter-label}
\end{figure*}

\begin{figure*}[ht]
    \centering
    \input{scalings.tikz}
    \caption{\textbf{Scaling of the minimum Hamiltonian gap with the number of degrees of freedom.} Results obtained using $D=1$, $p=1$ and $\varepsilon=1$, for  (a) $H(s) = H_\mathrm{DWave}$, (b) $H(s) = H_\mathrm{lin}$, (c) $H(s) =  H_\mathrm{af}$.}
    \label{fig:scalings}
\end{figure*}

Figure~\ref{fig:enter-label}a-b-c shows the time evolution of the Hamiltonian gap of $H_\mathrm{DWave}$, $H_\mathrm{lin}$, and $H_\mathrm{af}$, respectively. The left, middle and right columns correspond to $k_0 = 0$, $k_0 = \pi$ and $k_0 = 2\pi$, respectively. The evolution of the minimum Hamiltonian gap with the number of degrees of freedom is shown in Figure~\ref{fig:scalings} for all $(H(s),k_0)$ combinations.\\

For $k_0 = 0$, we observe that the minimum gap scales as $\mathcal{O}(N_\mathrm{DOF}^{-0.83})$ and $\mathcal{O}(N_\mathrm{DOF}^{-0.8})$ for $H_\mathrm{DWave}$ and $H_\mathrm{lin}$, which leads to $t_\mathrm{a} = \mathcal{O}(N_\mathrm{DOF}^{1.66})$ and $t_\mathrm{a} = \mathcal{O}(N_\mathrm{DOF}^{1.6})$, respectively. The gap closing seems to be slower for $H_\mathrm{af}$, but finite size effects are too strong to extrapolate.\\

\noindent For $k_0 = \pi$, we observe that the minimum gap decays as $\mathcal{O}(N_\mathrm{DOF}^{-1.36})$, $\mathcal{O}(N_\mathrm{DOF}^{-1.16})$ and $\mathcal{O}(N_\mathrm{DOF}^{-0.91})$ for $H_\mathrm{DWave}$, $H_\mathrm{lin}$ and $H_\mathrm{af}$, respectively. This leads to $t_\mathrm{a} = \mathcal{O}(N_\mathrm{DOF}^{2.72})$, $t_\mathrm{a} = \mathcal{O}(N_\mathrm{DOF}^{2.32})$ and $t_\mathrm{a} = \mathcal{O}(N_\mathrm{DOF}^{1.82})$, respectively.\\

\noindent For $k_0 = 2\pi$, we observe the minimal gap scalings $\mathcal{O}(N_\mathrm{DOF}^{-1.18})$, $\mathcal{O}(N_\mathrm{DOF}^{-1.07})$ and $\mathcal{O}(N_\mathrm{DOF}^{-0.76})$ for $H_\mathrm{DWave}$, $H_\mathrm{lin}$ and $H_\mathrm{af}$, respectively, leading to $t_\mathrm{a} = \mathcal{O}(N_\mathrm{DOF}^{2.36})$, $t_\mathrm{a} = \mathcal{O}(N_\mathrm{DOF}^{2.14})$ and $t_\mathrm{a} = \mathcal{O}(N_\mathrm{DOF}^{1.52})$.\\

\noindent These scalings of the annealing time are summarized in Table~\ref{tab:scalings}. We notice that the minimum gap of $H_\mathrm{lin}$ closes slower than the one of $H_\mathrm{DWave}$ for the three problems we solved. Antiferromagnetic fluctuation terms in the driving did slowen the gap closing even more for $k_0=\pi$ and $k_0=2\pi$, at least. This gives insights into the possibility to engineer annealing scheduling and driving Hamiltonians to obtain favorable annealing time scalings. Finally, we note that the annealing-based approach seems to outperform naive classical algorithms such as unpreconditioned CG and GMRES, which scale as $\mathcal{O}(N_\mathrm{DOF}^2)$, for $(H=H_\mathrm{DWave},k_0 = 0)$, $(H=H_\mathrm{lin},k_0 = 0)$, $(H=H_\mathrm{af},k_0 = \pi)$ and $(H=H_\mathrm{af},k_0 = 2\pi)$.

\begin{table}[h!]
    \centering
    \begin{tabular}{|c|c|c|c|}
        \hline
         & $k_0=0$ & $k_0=\pi$ & $k_0=2\pi$ \\ \hline
        $H_\mathrm{DWave}$         & $\mathcal{O}(N_\mathrm{DOF}^{1.66})$ & $\mathcal{O}(N_\mathrm{DOF}^{2.72})$ & $\mathcal{O}(N_\mathrm{DOF}^{2.36})$ \\ \hline
        $H_\mathrm{lin}$           & $\mathcal{O}(N_\mathrm{DOF}^{1.60})$ & $\mathcal{O}(N_\mathrm{DOF}^{2.32})$ & $\mathcal{O}(N_\mathrm{DOF}^{2.14})$ \\ \hline
        $H_\mathrm{af}$            & $-$                                  & $\mathcal{O}(N_\mathrm{DOF}^{1.82})$ & $\mathcal{O}(N_\mathrm{DOF}^{1.52})$ \\ \hline
    \end{tabular}
    \caption{\textbf{Annealing time scalings for the Helmholtz problem.} Results obtained using $D=1$, $p=1$ and $\varepsilon=1$. The annealing time is considered proportional to $\Delta_\mathrm{min}^{-2}$, with $\Delta_\mathrm{min}$ available in Figure~\ref{fig:scalings} for each $(H,k_0)$ combinations.}
    \label{tab:scalings}
\end{table}

\section{Conclusion}

A one-dimensional Helmholtz problem with Dirichlet and Neumann boundary conditions has been solved with adaptive quantum annealing eigensolver (AQAE) and adaptive classical annealing eigensolver (ACAE) algorithm. The homogeneous problem has been solved at first to find the eigenmodes of a vacuum-Si$\mathrm{O}_2$ medium. A source term has been added to compute the response for various wavenumbers. Results obtained using a classical annealer have been considered as a ground truth and have highlighted that large condition numbers lead to non-convergence of the algorithms. To address this, the density of discretization (dictated by the parameter
$D$, i.e., the number of binary variables used to approximate continuous variables) must be refined accordingly. This result is valid for all box-algorithm-based methods. The AQAE results, obtained using D-Wave's quantum annealer, have highlighted the current limitations of quantum annealing hardware in terms of integrated control errors (ICE). The results have also shown that the robustness of the AQAE algorithm with respect to ICE is problem-dependent. We also discussed, based on gap scaling analyses, the possibility of the algorithm to lead to a quantum advantage and concluded that it depends on both the frequency regime and the dimension of the problem, if discontinuous quantum phase transitions are avoided. The potential impact of the driving Hamiltonian has also be highlighted, giving insights into the possibility to engineer driving Hamiltonians to obtain favorable gap closings. As a perspective, non-hermitian systems (e.g., Helmholtz problems with loss) could be considered, and a more detailed finite-size scaling of the performance of the algorithm with respect to different noise sources could be realized, in order to determine e.g., which noise source is the most detrimental. QA could also be combined with classical Helmholtz preconditioning techniques, with the potential to further improve overall computational complexity.

\section*{Data availability}
All our results can be reproduced using our codes in Ref.~\cite{remi2025github}.

\section*{Acknowledgement}
This work was funded in part by the European Regional Development Fund (ERDF) and the Walloon Region through project 1149 VirtualLab\_ULiege (program 2021-2027). This work is partially supported by a ‘‘Strategic Opportunity’’ grant from the University of Liège.

\begin{appendix}

\section{Mathematical developments}
\label{app:maths}

\subsection{Min-max theorem for gEVPs}
\label{app:minmaxthm}

In this appendix, the min-max theorem for eigenvalue problems will be derived and generalized to generalized eigenvalue problems.

\begin{lemma}[]
    Let $H$ be a Hermitian operator that belongs to a Hilbert space $\mathcal{H}$ with eigenvalues ordered such that $\lambda_0 \leq \cdots \leq \lambda_{N-1}$ and corresponding eigenvectors $\bm \phi_0, \cdots, \bm \phi_{N-1}$. Let $\mathcal{S}_{k} \subseteq \mathcal{H}\setminus \{\bm 0\}$ with $\dim(\mathcal{S}_{k}) = N-k$, then
    \[
    \exists \, \bm \phi \in \mathcal{S}_{k}\quad \mathrm{s.t.}\quad \frac{\bm \phi^\dag H \bm \phi}{\bm \phi^\dag \bm \phi} \leq \lambda_k.
    \]
\end{lemma}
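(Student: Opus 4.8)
The plan is to prove this via a dimension-counting argument, which is the standard route to the ``easy half'' of the Courant--Fischer min-max characterization. Since $H$ is Hermitian on the $N$-dimensional space $\mathcal{H}$, its eigenvectors $\bm\phi_0,\dots,\bm\phi_{N-1}$ may be taken orthonormal and form a basis of $\mathcal{H}$ (I read the $A$ appearing in the Rayleigh quotient as $H$). The key observation is that we need not control the quotient over all of $\mathcal{S}_k$; it suffices to exhibit one good vector, and the natural place to seek it is inside the span of the eigenvectors belonging to the $k+1$ smallest eigenvalues.

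First I would introduce the test subspace $\mathcal{T}_k = \mathrm{span}(\bm\phi_0,\dots,\bm\phi_k)$, of dimension $k+1$. Then I would invoke the Grassmann dimension formula: since $\mathcal{S}_k$ and $\mathcal{T}_k$ are both subspaces of $\mathcal{H}$,
\[
\dim(\mathcal{S}_k \cap \mathcal{T}_k) = \dim(\mathcal{S}_k) + \dim(\mathcal{T}_k) - \dim(\mathcal{S}_k + \mathcal{T}_k) \geq (N-k) + (k+1) - N = 1,
\]
so the intersection contains a nonzero vector, which I call $\bm\phi$.

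Next, since $\bm\phi \in \mathcal{T}_k$, I would expand it in the eigenbasis as $\bm\phi = \sum_{j=0}^k c_j \bm\phi_j$. Using orthonormality and $H\bm\phi_j = \lambda_j \bm\phi_j$, the numerator and denominator of the Rayleigh quotient become $\bm\phi^\dag H \bm\phi = \sum_{j=0}^k |c_j|^2 \lambda_j$ and $\bm\phi^\dag \bm\phi = \sum_{j=0}^k |c_j|^2$. Bounding $\lambda_j \leq \lambda_k$ for every $j \leq k$ then yields $\bm\phi^\dag H \bm\phi \leq \lambda_k\, \bm\phi^\dag \bm\phi$, which gives the claim after dividing by the (strictly positive, since $\bm\phi \neq \bm 0$) denominator.

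I expect no serious obstacle, as the argument is elementary; the one point deserving care is the dimension count guaranteeing a nonzero common vector, which relies crucially on the hypothesis $\dim(\mathcal{S}_k) = N-k$ being exactly large enough that $\dim(\mathcal{S}_k) + \dim(\mathcal{T}_k)$ exceeds $N$. A secondary subtlety is the infinite-dimensional phrasing (``Hilbert space''): the statement as written presumes the finite index set $\{0,\dots,N-1\}$, so I would carry out the argument in the $N$-dimensional setting relevant to the gEVP at hand, emphasizing that it is precisely the orthonormality of the eigenvectors that makes the cross terms vanish in the Rayleigh quotient.
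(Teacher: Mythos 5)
Your proof is correct and follows essentially the same route as the paper's: a dimension count showing $\mathcal{S}_k \cap \mathrm{span}(\bm\phi_0,\dots,\bm\phi_k)$ contains a nonzero vector, followed by bounding the Rayleigh quotient of that vector by $\lambda_k$. The only difference is that you spell out the eigenbasis expansion and the averaging bound that the paper dismisses as ``trivial,'' and you correctly identify that the $A$ in the statement should read $H$.
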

\begin{proof}
    Dimensional reasonings allows to write
    \begin{align*}
    \dim(\mathcal{S}_{k}) + \dim(\mathrm{span}(\bm \phi_0, \cdots, \bm \phi_{k})) &= N+1 > N\\ \Rightarrow \mathcal{S}_{k} \cap \mathrm{span}(\bm \phi_0, \cdots, \bm \phi_{k}) &\neq \emptyset.
    \end{align*}
    Hence, for any $\bm \phi \in \mathcal{S}_{k} \cap \mathrm{span}(\bm \phi_0, \cdots, \bm \phi_{k})$, it is trivial that the lemma is verified.
\end{proof}

\begin{theorem}[Min-Max Theorem]
    Let \( H \) and \( M \) be Hermitian matrices in $\mathbb{C}^{N_{\mathrm{DOF}}\times N_{\mathrm{DOF}}}$, with \( M \) being positive definite. Let us define a Hilbert space $\mathcal{S}_{k}\subseteq \mathbb{C}^{N_{\mathrm{DOF}}\times N_{\mathrm{DOF}}}\setminus\{\bm 0\}$ with $\dim(\mathcal{S}_k) = N_{\mathrm{DOF}}-k$. Then, the generalized eigenvalues \( \{\lambda_i\} \) (with $\lambda_i$ ordered such that $\lambda_0 \leq \cdots \leq \lambda_{N_{\mathrm{DOF}}-1}$) of the generalized eigenvalue problem
    \[
    H \bm{\phi} = \lambda M \bm{\phi}
    \]
    are given by
    \[
    \lambda_k = \displaystyle\max_{\mathcal{S}_{k}} \, \displaystyle\min_{\bm{\phi} \in \mathcal{S}_{k}} R_{H,M}(\bm \phi) = \displaystyle\max_{\mathcal{S}_{k}} \, \displaystyle\min_{\bm{\phi} \in \mathcal{S}_{k}} \frac{\bm{\phi}^\dag H \bm{\phi}}{\bm{\phi}^\dag M \bm{\phi}},\\
    \]
    where $R_{H,M}$ denotes the Rayleigh quotient for the ($H,M$) pair, and $\bm \phi^\dag$ denotes the conjugate transpose of $\bm \phi$.
\end{theorem}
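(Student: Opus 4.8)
The plan is to reduce the generalized problem to an ordinary Hermitian eigenvalue problem and then invoke the preceding lemma. Since $M$ is Hermitian and positive definite, it admits a unique Hermitian positive-definite square root $M^{1/2}$, which is invertible. Introducing the change of variables $\bm\psi = M^{1/2}\bm\phi$ and the Hermitian matrix $\tilde H = M^{-1/2} H M^{-1/2}$, the generalized eigenproblem $H\bm\phi = \lambda M\bm\phi$ becomes the standard one $\tilde H \bm\psi = \lambda\bm\psi$, so the two problems share the same ordered eigenvalues $\lambda_0 \leq \cdots \leq \lambda_{N_{\mathrm{DOF}}-1}$. A direct substitution then shows that the generalized Rayleigh quotient is carried into the ordinary one,
\begin{equation*}
    R_{H,M}(\bm\phi) = \frac{\bm\phi^\dag H \bm\phi}{\bm\phi^\dag M \bm\phi} = \frac{\bm\psi^\dag \tilde H \bm\psi}{\bm\psi^\dag \bm\psi} = R_{\tilde H}(\bm\psi).
\end{equation*}

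Next I would record that, because $M^{1/2}$ is invertible, the map $\bm\phi \mapsto M^{1/2}\bm\phi$ is a linear bijection of $\mathbb{C}^{N_{\mathrm{DOF}}}$ that sends every subspace $\mathcal{S}_k$ of dimension $N_{\mathrm{DOF}}-k$ to a subspace $\mathcal{T}_k = M^{1/2}\mathcal{S}_k$ of the same dimension, and conversely. Combined with the identity above, this yields
\begin{equation*}
    \max_{\mathcal{S}_k}\min_{\bm\phi\in\mathcal{S}_k} R_{H,M}(\bm\phi) = \max_{\mathcal{T}_k}\min_{\bm\psi\in\mathcal{T}_k} R_{\tilde H}(\bm\psi),
\end{equation*}
so it suffices to prove the claim for the ordinary eigenproblem associated with $\tilde H$.

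For the ordinary problem I would establish the two inequalities separately. The upper bound is exactly Lemma~1 applied to $\tilde H$: for every $(N_{\mathrm{DOF}}-k)$-dimensional subspace $\mathcal{T}_k$ there is a vector on which the Rayleigh quotient does not exceed $\lambda_k$, hence $\min_{\bm\psi\in\mathcal{T}_k} R_{\tilde H}(\bm\psi) \leq \lambda_k$, and taking the maximum over all such subspaces preserves the inequality. For the matching lower bound I would exhibit an explicit maximiser: taking $\mathcal{T}_k^\star = \mathrm{span}(\bm\psi_k,\dots,\bm\psi_{N_{\mathrm{DOF}}-1})$, where $\{\bm\psi_i\}$ is an orthonormal eigenbasis of the Hermitian matrix $\tilde H$, any $\bm\psi = \sum_{i\geq k} c_i \bm\psi_i$ gives $R_{\tilde H}(\bm\psi) = \big(\sum_{i\geq k}\lambda_i|c_i|^2\big)/\big(\sum_{i\geq k}|c_i|^2\big) \geq \lambda_k$, since $\lambda_i \geq \lambda_k$ for $i\geq k$. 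Thus $\min_{\mathcal{T}_k^\star} R_{\tilde H} \geq \lambda_k$, and the two bounds force equality.

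The only genuinely delicate point is the reduction step: one must verify that the Hermitian square root exists and is invertible, that $\tilde H$ is again Hermitian, and—most importantly—that the correspondence $\mathcal{S}_k \leftrightarrow M^{1/2}\mathcal{S}_k$ is a dimension-preserving bijection, so that the outer maximisation ranges over exactly the same family of subspaces on both sides. Once this is in place, the upper bound is immediate from Lemma~1 and the lower bound is a short direct computation on the eigenbasis, so I expect no further obstacles.
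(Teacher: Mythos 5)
Your proof is correct and takes essentially the same route as the paper's: reduce the generalized problem to a standard Hermitian one by factoring $M$ (you use the Hermitian square root $M^{1/2}$, the paper uses the Cholesky factor $L$ with $M = L^\dag L$ --- an immaterial difference), then get the upper bound from the lemma and the matching lower bound from the span of the top eigenvectors. If anything, your write-up is slightly more complete, since you explicitly check that $\bm\phi \mapsto M^{1/2}\bm\phi$ is a dimension-preserving bijection between subspaces, so the outer maximization ranges over the same family on both sides --- a point the paper leaves implicit.
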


\begin{proof}
Consider the standard EVP
\[
H \bm \phi = \lambda \bm \phi,
\]
for Hermitian $H$. Using the above lemma,
\[
    \displaystyle\min_{\bm{\phi} \in \mathcal{S}_{k}} \frac{\bm{\phi}^\dag H \bm{\phi}}{\bm{\phi}^\dag \bm{\phi}} \leq \lambda_k.
\]
Taking $\mathcal{S}_{k} = \mathrm{span}(\bm \phi_k, \cdots, \bm \phi_{N_{\mathrm{DOF}}-1})$ gives
\[
    \displaystyle\min_{\bm{\phi} \in \mathcal{S}_{k}} \frac{\bm{\phi}^\dag H \bm{\phi}}{\bm{\phi}^\dag \bm{\phi}} \geq \lambda_k,
\]
which concludes the proof. This can be generalized for
\[
H \bm \phi = \lambda M \bm \phi
\]
with Hermitian positive definite $M$. Taking the Cholesky decomposition $M = L^\dag L$, the variational functional becomes
\begin{align*}
    \frac{\bm{\phi}^\dag H \bm{\phi}}{\bm{\phi}^\dag M \bm{\phi}} = \frac{\bm{\phi}^\dag H \bm{\phi}}{\bm{\phi}^\dag L^\dag L \bm{\phi}}
    = \frac{\bm{\phi}^\dag L^\dag L^{-\dag}HL^{-1}L \bm{\phi}}{\bm{\phi}^\dag L^\dag L \bm{\phi}} = \frac{\bm y^\dag \Tilde{H}\bm y}{\bm y^\dag \bm y},
\end{align*}
where $\Tilde{H} = L^{-\dag}HL^{-1}$ and $\bm y = L \bm \phi$. This shows the equivalence between the variational formulation of standard and generalized EVPs.
\end{proof}

\subsection{Continuous to binary quadratic functions mapping}
\label{app:gevpqubomapping}

Once a continuous-to-binary mapping $\phi_j \rightarrow \phi_{j,\mathrm{min}} + \bm b^T \bm q_j$ is defined, any quadratic expression $\bm \phi^T A \bm \phi$, $\bm \phi \in \mathbb{R}^{N_{\mathrm{DOF}}}$ can be written as a quadratic binary expression as follows:

\begin{widetext}
\begin{align*}
    \bm \phi^T A \bm \phi \approx \,
    &\bm \phi_\mathrm{min}^T A \bm \phi_\mathrm{min} +
    \begin{bmatrix}
       \bm{q}_0^{T}\bm{b}\cdots \bm{q}_{N_{\mathrm{DOF}}-1}^{T}\bm{b}
    \end{bmatrix}
    \,A   \bm \phi_\mathrm{min}
    +
    \bm \phi_\mathrm{min}^T A
    \begin{bmatrix}
           \bm{b}^T\bm{q}_0\\
           \vdots \\
           \bm{b}^T\bm{q}_{N_{\mathrm{DOF}}-1}
    \end{bmatrix}
    +
    \begin{bmatrix}
       \bm{q}_0^{T}\bm{b}\cdots \bm{q}_{N_{\mathrm{DOF}}-1}^{T}\bm{b}
    \end{bmatrix}
    \,A
    \begin{bmatrix}
           \bm{b}^T\bm{q}_0\\
           \vdots \\
           \bm{b}^T\bm{q}_{N_{\mathrm{DOF}}-1}
    \end{bmatrix}\\
    = \, &\bm \phi_\mathrm{min}^T A \bm \phi_\mathrm{min} + \bm q^T
    \begin{bmatrix}
    D_{1} &        & 0   \\
          & \ddots &     \\
    0     &        & D_{N_{\mathrm{DOF}}}
    \end{bmatrix} \bm q
    +
    \bm q^T
    \begin{bmatrix}
    A_{11}\bm{b}\bm{b}^T & \cdots  & A_{1N_{\mathrm{DOF}}}\bm{b}\bm{b}^T\\
    \vdots & \ddots & \vdots\\
    A_{N_{\mathrm{DOF}}1}\bm{b}\bm{b}^T & \cdots  & A_{N_{\mathrm{DOF}}N_{\mathrm{DOF}}}\bm{b}\bm{b}^T
    \end{bmatrix}
    \bm q,
\end{align*}
\end{widetext}

\noindent where
\[
\bm q = \begin{bmatrix}
           \bm{q}_0 \\
           \vdots \\
           \bm{q}_{N_{\mathrm{DOF}}-1}
        \end{bmatrix},\quad
 D_i = \mathrm{diag}\left( \left(\left(A_{i:}+A_{:i}^T\right)\cdot \bm \phi_\mathrm{min}\right) \bm b\right).
\]
By defining
\begin{equation*}
Q_A =
    \begin{bmatrix}
    D_{1} &        & 0   \\
          & \ddots &     \\
    0     &        & D_{N_{\mathrm{DOF}}}
    \end{bmatrix}
    +
    \begin{bmatrix}
    A_{11}\bm{b}\bm{b}^T & \cdots  & A_{1N_{\mathrm{DOF}}}\bm{b}\bm{b}^T\\
    \vdots & \ddots & \vdots\\
    A_{N_{\mathrm{DOF}}1}\bm{b}\bm{b}^T & \cdots  & A_{N_{\mathrm{DOF}}N_{\mathrm{DOF}}}\bm{b}\bm{b}^T
    \end{bmatrix},
\end{equation*}
we can write the continuous quadratic function as
\begin{equation}
    \bm{\phi}^T A \bm{\phi} \approx \bm \phi_\mathrm{min}^T A \bm \phi_\mathrm{min} + \bm q^T Q_A \bm q.
    \label{eq:binaryquadraticmapping}
\end{equation}
For the computation of the $n$-th eigenmode, the variational function is of the form
\[
f_n(\bm \phi) = \bm{\phi}^\dag H \bm{\phi} - \gamma \bm{\phi}^\dag M \bm{\phi} +\xi_n(\bm \phi),
\]
where $\xi_n(\bm \phi)$ is a quadratic penalty term associated with any $\bm \phi$ that are not $M$-orthogonal to the $n-1$ previous eigenmodes:
\begin{equation}
    \xi_n(\bm \phi) = \sum_{m=0}^{n-1} \beta_m  \bm \phi^T P_{M,\bm \phi_m} \bm \phi,
\end{equation}
with $P_{M,\bm \phi_m} = M \bm \phi_m \bm \phi_m^\dag M^\dag$
and $\beta_m > \lambda_n - \lambda_m$. Using \eqref{eq:binaryquadraticmapping} to map the continuous variables to binary ones, the variational function becomes
\begin{align}
f_n(\bm q) &= \bm \phi_\mathrm{min}^T H \bm \phi_\mathrm{min} + \bm{q}^T Q_H \bm{q} \\&\quad - \gamma \left(\bm \phi_\mathrm{min}^T M \bm \phi_\mathrm{min} + \bm{q}^T Q_M \bm{q} \right) \\&\quad + \sum_{m=0}^{n-1} \beta_m \left( \bm \phi_\mathrm{min}^T P_{M,\bm \phi_m} \bm \phi_\mathrm{min} + \bm q^T Q_{P_{M,\bm \phi_m}} \bm q\right)\label{eq:penalized_cost}.
\end{align}

\section{Complementary results}\label{app:complementary}
In this section, the convergence of the eigenmodes and eigenvalue discrepancy defined in Sec.~\ref{sec:homogeneous_results} are shown in Figure~\ref{fig:complementary_gevp}~-~\ref{fig:complementary_helmholtz} as complements to the results of Figure~\ref{fig:pannel_gevp}~-~\ref{fig:pannel_helmholtz} respectively.

\begin{figure*}[htbp]
    \centering
    \input{complementary_gevp.tikz}
    \caption{Evolution of the eigenmode (top row) and eigenvalue (bottom row) discrepancy with respect to AQAE and ACAE iteration number solving \eqref{eq:homogeneous_gevp}. }
    \label{fig:complementary_gevp}
\end{figure*}

\begin{figure*}[htbp]
    \centering
    \input{complementary_helmholtz.tikz}
    \caption{Evolution of the eigenmode (top row) and eigenvalue (bottom row) discrepancy with respect to AQAE (in red) and ACAE (in viridis color map) iteration number solving \eqref{eq:helmholtz_normal_gevp} with (a) $k_0 = 0$, (b) $k_0 = \pi$, (c) $k_0 = 2\pi$. Error bars represent the standard deviation of the results.}
    \label{fig:complementary_helmholtz}
\end{figure*}

One should keep in mind that the convergence of the residuals do not always imply that the eigenmode computation is successful: if the weight given to the penalization term in \eqref{eq:penalized_cost} is too small, the ground state of the induced Hamiltonian is not orthogonal to the lower eigenmodes of the gEVP.

\end{appendix}

\FloatBarrier
\bibliography{ref}
\end{document}